\def\F {{\mathbb{F}}}
\def\cX{{\mathcal X}}
\def\cL{{\mathcal L}}
\def\cP{{\mathcal P}}
\def\bc{{\bf c}}
\def\bo{{\bf 0}}
\def\bx{{\bf x}}
\def\beq{\begin{equation}}
\def\eeq{\end{equation}}
\def\c {{\mathbf{c}}}
\newtheorem{thm}{Theorem}[section]
\newtheorem{defn}[thm]{Definition}
\newtheorem{prop}[thm]{Proposition}
\newtheorem{lem}[thm]{Lemma}
\newtheorem{cor}[thm]{Corollary}
\numberwithin{equation}{section} \newtheorem{rem}[thm]{Remark}
\begin{document}

\onecolumn

\title{Erasure List-Decodable Codes from Random and Algebraic Geometry Codes}

\author{Yang Ding, Lingfei~Jin and Chaoping~Xing
\thanks{All authors are with Division of
Mathematical Sciences, School of Physical and Mathematical Sciences,
Nanyang Technological University, Singapore 637371, Republic of
Singapore (email: \{dingyang,lfjin,xingcp\}@ntu.edu.sg).}

\thanks{The work is partially supported by the Singapore   A*STAR   SERC   under   Research   Grant
1121720011.}}

\maketitle

\begin{abstract}
Erasure list decoding was introduced to correct a larger number of erasures with output of a list of possible candidates. In the present paper, we consider both random linear codes and algebraic geometry codes for list decoding erasure errors. The contributions of this paper are two-fold. Firstly,    we show that, for arbitrary $0<R<1$ and $\epsilon>0$ ($R$ and $\epsilon$ are independent), with high probability a random linear code is an erasure list decodable code with constant list size $2^{O(1/\epsilon)}$ that can correct a fraction $1-R-\epsilon$ of erasures, i.e., a random linear code achieves the information-theoretic optimal trade-off between information rate and fraction of erasure errors.   Secondly, we show that algebraic geometry codes are good erasure list-decodable codes. Precisely speaking, for any $0<R<1$ and $\epsilon>0$, a $q$-ary algebraic geometry code of rate $R$ from the Garcia-Stichtenoth tower can correct $1-R-\frac{1}{\sqrt{q}-1}+\frac{1}{q}-\epsilon$ fraction of erasure errors with list size $O(1/\epsilon)$. This improves the Johnson bound  applied to algebraic geometry codes. Furthermore,  list decoding of these algebraic geometry codes can be implemented in polynomial time.
\end{abstract}

\begin{keywords}
Erasure codes, List decoding, Algebraic geometry codes, Generalized Hamming weights.
\end{keywords}

\section{Introduction}
Erasure codes have received great attentions for their wide applications in recovering packet losses in the internet and  storage systems. In the model of erasure channel, errors are described as erasures, namely  the receivers are supposed to know the positions where the erasures occurred. Compared with other communication channels such as  adversarial noise channel,  erasure channel is much simpler. Thus, we can expect better parameters for erasure channel than  adversarial noise channel.  Instead of the unique  decoding, the model of list decoding for which a decoder allows to output a list of possible codewords  was independently introduced by Elias and Wonzencraft \cite{El, Wo}. The decoding is considered to be successful as long as the correct codeword is included in the list and the list size is not too big.

The problem of list decoding for classical adversarial noise channel has been extensively studied (see \cite{El,Gru01,Gru2, Gru3, Su,Wo, ZP}, for example). A fundamental problem in list decoding is the tradeoff among the information rate, decoding radius (i.e., fraction of errors that can be corrected) and the list size. In other words, if we fix one of these three parameters, then one is interested in optimal tradeoff between the remaining two parameters. For instance, if the list size is fixed to be constant or polynomial in the length of codes, the problem becomes a tradeoff between  information rate and decoding radius.
 \begin{defn}\label{1.1}($(\tau,L)$-erasure list decodability)   Let $\Sigma$ be a finite alphabet of size $q$, $L > 1$ an
integer, and  $\tau\in (0, 1)$. A code $C \subseteq \Sigma^n$
is said to be $(\tau, L)$-erasure list-decodable, if   for every $\mathbf{r}\in\F_q^{(1-\tau)n}$, and any subset $T\subseteq\{1,2,\cdots,n\}$ of size $(1-\tau)n$,  one has
 $$|\{\mathbf{c}\in C|\mathbf{c}_T=\mathbf{r}\}|\leq L,$$
 where $\c_T$ is the projection of $\c$ onto the coordinates indexed by $T$. In other words, given any received word with at most $\tau n$ erasures, there are at most $L$ codewords that are consistent  with the unerased portion of the received word.
 \end{defn}

\subsection*{Known results}

It is known that, for an erasure channel where the codeword symbols are randomly and independently erased with probability $\tau$, the capacity  is $1-\tau$ (see \cite{El1}). Although erasure list decoding has been considered previously (see \cite{Gu, GS, Gru2, Gru3}),  a lot of problems still remain unsolved. Let us summarize some of  previous results on  erasure list decoding below.
 \begin{itemize}
 \item[(i)] It was shown in \cite{Gu} that, for any small $\epsilon>0$ and  $\tau\in(0,1)$, a $(\tau,L)$-erasure list-decodable code of rate $1-\tau-\epsilon$ must satisfy $L\ge \Omega(\frac1{\epsilon})$; and on the other hand, there exists a  $(\tau,O(\exp(\frac1{\epsilon})))$-erasure list-decodable code of rate $1-\tau-\epsilon$.
 \item[(ii)]
In \cite[Proposition 10.1]{Gru01},  the Johnson bound for erasure decoding radius was derived. It says that, for any given $\epsilon>0$,  every $q$-ary code of relative distance $\delta<1-1/q$ is $(\delta+\frac{\delta}{q-1}-\epsilon, O(1/\epsilon))$-erasure list-decodable. This means that, with a constant list size,  erasure decoding radius is enlarged by approximaly $\frac{\delta}{q-1}$ compared with  unique erasure decoding whose decoding radius is only $\delta$. On the other hand, it was shown further in \cite[Proposition 10.2]{Gru01} that there exists a $q$-ary code of length $n$ and relative distance $\delta<1-1/q$ that is not $(\delta+\frac{\delta}{q-1}+\epsilon, 2^{\Omega(\epsilon^2\delta n)})$-erasure list-decodable for every small $\epsilon>0$. This implies that the best bound on erasure list decoding radius of a $q$-ary code of relative minimum distance $\delta$ is $\delta+\frac{\delta}{q-1}$.
 \item[(iii)] In \cite{Gu},
Guruswami showed that, for any small $\epsilon>0$, with high probability a  random linear code  of rate $R=\Omega(\epsilon/\log(1/\epsilon))$ is $(1-\sigma, O(1/\sigma))$-erasure list-decodable for every $\sigma$ satisfying $\epsilon\leq\sigma\leq1$. Furthermore, by the concatenation method  Guruswami showed in \cite{Gu} that, for any small $\epsilon>0$, one can  construct a family of concatenated (binary) $(1-\epsilon, O(1/\epsilon))$-erasure list-decodable codes of rate $\Omega(\epsilon^2/\log(1/\epsilon))$ in polynomial time.  A slightly better rate was obtained  for nonlinear codes over larger alphabet size in \cite{Gru3}.
\end{itemize}
\subsection*{Our results and comparison}
Our contributions of this paper are two-fold.
 \begin{itemize}
 \item[(i)] Firstly, we show that, for arbitrary $0<R<1$ and $\epsilon>0$ ($R$ and $\epsilon$ are independent), with high probability a random linear code is  $(1-R-\epsilon, 2^{O(1/\epsilon)})$-erasure list-decodable, i.e., a random linear code achieves the information-theoretic optimal tradeoff between information rate and fraction of erasure errors that can be corrected.
 While Theorem 2 in  \cite{Gu} which was derived from \cite{He} only shows existence of  $(1-R-\epsilon, 2^{O(1/\epsilon)})$-erasure list-decodable codes for arbitrary $0<R<1$ and $\epsilon>0$.
 \item[(ii)]
  Secondly, we show that algebraic geometry codes are good erasure list-decodable codes. Precisely speaking, for any $0<\tau<1$ and $\epsilon>0$, a $q$-ary algebraic geometry code from the Garcia-Stichtenoth tower has rate at least $1-\tau-\frac{1}{\sqrt{q}-1}+\frac{1}{q}-\epsilon$ and  is $(\tau, O(1/\epsilon))$-erasure list-decodable. Furthermore,  list decoding  of these algebraic geometry codes can be implemented in polynomial time. On the other hand, if we apply the Johnson bound given in \cite[Proposition 10.1]{Gru01} to general algebraic geometry codes, we can only claim that a $q$-ary
  algebraic geometry code from the Garcia-Stichtenoth tower has rate $1-\tau-\frac{1}{\sqrt{q}-1}+\frac{\tau}{q}-\epsilon$ and  is $(\tau, O(1/\epsilon))$-erasure list-decodable. This rate is always smaller than our rate for any $\tau\in(0,1)$. This implies that the Johnson bound could be improved for some special class of codes although it is optimal in general.
\end{itemize}

\subsection*{Open problems}
 For adversarial error channel,  it has been shown that, given decoding radius $0<\tau<1$, the optimal rate for list decoding  is $R=1-H_q(\tau)$, where
$H_q(x)=x\log_q(q-1)-x\log_q x-(1-x)\log_q (1-x)$ is the $q$-ary entropy function. More precisely speaking, for any small $\epsilon>0$ and $\tau$ with $0<\tau<1-1/q$,  with high probability a random code is $(1-H_q(\tau)-\epsilon, O(\frac1{\epsilon}))$-list decodable. Furthermore, every $q$-ary $(1-H_q(\tau)-\epsilon,L)$-list-decodable code has list size at least $\Omega(\log 1/\epsilon)$. It is still an open problem to determine if there exists a $q$-ary $(1-H_q(\tau)-\epsilon,L)$-list decodable code with list size $L$ smaller than $O(\frac1{\epsilon})$. Under the situation of erasure list decoding, the optimal rate $R$ that one can achieve is $R=1-\tau$. If we denote $L_{\tau,q}(\epsilon)$ to be the smallest integer $L$ for which there are
$q$-ary $(\tau, L)$-erasure list-decodable codes of rate at least $1 - \tau - \epsilon$ for infinitely many lengths $n$, then it follows from our result and \cite{Gu} that $\Omega(\frac1{\epsilon})\le L_{\tau,q}(\epsilon)\le 2^{O(1/\epsilon)}$. Now the first open problem is
\begin{quote}
{\it Open Problem 1:} Determine $L_{\tau,q}(\epsilon)$.
\end{quote}

In the literature, there are not many results on constructive bounds on erasure list decoding  except for sufficiently large $q$ or small rate \cite{Gu,GS}. The second open problem would be
\begin{quote}
{\it Open Problem 2:} Narrow the rate gap between $1-\tau-\frac{1}{\sqrt{q}-1}+\frac{1}{q}$ and $1-\tau$ by constructing erasure list-decodable codes explicitly, i.e., construct a $q$-ary $(\tau,L)$-erasure list-decodable codes of rate $R>1-\tau-\frac{1}{\sqrt{q}-1}+\frac{1}{q}$ such that the list size $L$ is either a constant or a polynomail in length.
\end{quote}

\subsection*{Organization}
The paper is organized as follows. In Section 2, we introduce some necessary natation and definitions and known results as well. Section 3 is devoted to random codes. In the last section, we show that algebraic geometry codes are good erasure list-decodable codes.

\section{Preliminaries}
In this paper, we only focus on  linear codes. Recall that a $q$-ary $[n,k]_q$ linear  code is an $\F_q$-linear subspace  of $\F_q^n$ with dimension $k$, where $\F_q$ is a finite field with $q$ elements and $q$ is a prime power. $n$ is called the length of the code and $k$ is the dimension of the code.  The information rate of the code $C$ is defined as $R=k/n$ which represents the efficiency of the code. Another important parameter of the code is the distance which represents the error correcting capability. The distance of a linear code $C$ is defined to be the minimum Hamming weight of  nonzero codewords of $C$, denoted by $d=d(C)$. The relative distance $\delta=\delta(C)$ is defined to be the quotient $d/n$.

 From Definition \ref{1.1}, one knows that, in a $(\tau,L)$-erasure list decodable code $C$ of length $n$, for every $\mathbf{r}\in\F_q^{(1-\tau)n}$ and $T\subseteq\{1,2,\dots,n\}$ with $|T|=(1-\tau)n$ the number of the codewords in the output list that are consistent with $\mathbf{r}$ at the coordinates indexed by $T$  is at most $L$.
Thus, if $C$ is linear,  it is equivalent  to saying  that  the number of the codewords  that are $\bo$ at the coordinates indexed by $T$ is at most $L$, i.e., $|\{\mathbf{c}\in C|\mathbf{c}_T=\bo\}|\leq L$. Hence, an $[n,k,d]_q$-linear code is $((d-1)/n,1)$-erasure list-decodable,  but  not $(d/n,1)$-erasure list-decodable.

 \begin{defn}\label{2.2}(Erasure list decoding radius (ELDR))
 \begin{itemize}
 \item[(i)] For an integer $L\geq 1$ and a linear code $C$ of length $n$, we denote
 \[{\rm Rad}_L(C):=\max\{s\in\mathbb{Z}_{>0}:\; \mbox{$C$ is $(s/n,L)$-erasure list-decodable}\}.\]
 \item[(ii)] For an infinite family $\mathcal{C}=\{C_i\}_{i\geq 1}$ of $q$-ary linear codes with length tending to $\infty$ and an integer $L\geq 1$, we denote $${\rm ELDR}_L(\mathcal{C}):=\liminf_{i}\left\{\frac{{\rm Rad}_L(C_i)}{n_i}\right\},$$ where $n_i$ is the length of $C_i$.
  \end{itemize}\end{defn}

  \begin{defn}\label{2.3}For an integer $L\geq 1$ and $0\leq \tau\leq 1$, the maximum rate for linear $(\tau, L)$-erasure list-decodable code families is defined to be $$R_L(\tau):=\sup_{\mathcal{C}:\; {\rm ELDR}_L(\mathcal{C})\geq\tau}R(\mathcal{C}).$$
  \end{defn}

 The notation of erasure list decoding  for linear codes actually had already been studied in the form of generalized Hamming weight, see \cite{wei}.
However, the explicit relationship  between erasure list decoding and generalized Hamming weight had not been made clear until the work in \cite{Gu}. The concept of generalized Hamming weight was initially introduced in \cite{wei} and later  received great attention  due to applications in cryptography, design of codes, $t$-resilient functions and so on \cite{As}.

 \begin{defn}\label{2.4}(Generalized Hamming Weight) The $r$-th generalized Hamming weight of a code $C$, denoted by $d_r(C)$, is defined to be the size of the smallest support of an $r$-dimensional subcode of  $C$, i.e.,
 \[d_r(C)=\min \{{\rm |Supp}(D)|:\;\mbox{$D$ is a subspace of $C$ of dimension $r$}\},\]
  where ${\rm Supp}(D)=\{i: \; \exists (c_1,\dots,c_n)\in D,  c_i\neq 0\}.$
 \end{defn}

Note that $d_1(C)$ is exactly the minimum distance $d$ of $C$. The characterization of erasure list decodability through generalized Hamming weight is given  below.

  \begin{lem} (see \cite{Gu})\label{2.5}
A linear code  $C$ of length $n$ is $(s/n,L)$-erasure list-decodable if and only if $d_r(C)>s$, where $r=\lfloor\log_q L\rfloor+1$.
 \end{lem}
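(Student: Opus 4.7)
The plan is to prove Lemma 2.5 by chaining together three equivalences, each unpacking one layer of definition. The starting point is the reformulation already noted in the paragraph before Definition 2.2: since $C$ is linear, $C$ is $(s/n,L)$-erasure list-decodable if and only if for every subset $T\subseteq\{1,\ldots,n\}$ with $|T|=n-s$, the set $C(T):=\{\bc\in C:\bc_T=\bo\}$ has size at most $L$. This converts a statement about arbitrary received words into a statement purely about codewords supported outside $T$.

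Next I would pass from cardinality to dimension. The set $C(T)$ is the kernel of the coordinate-projection map $\bc\mapsto \bc_T$ from $C$ to $\F_q^{n-s}$, hence an $\F_q$-subspace of $C$; thus $|C(T)|=q^{\dim C(T)}$. The bound $|C(T)|\le L$ is therefore equivalent to $\dim C(T)\le \lfloor\log_q L\rfloor = r-1$, i.e., $\dim C(T)<r$. So $(s/n,L)$-erasure list-decodability is equivalent to the condition that $\dim C(T)<r$ for every $T$ of size $n-s$.

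For the final step, I would observe that $\dim C(T)<r$ for every such $T$ is the same as saying that $C(T)$ contains no $r$-dimensional subspace, for every $T$ of size $n-s$. A subspace $D\subseteq C$ is contained in $C(T)$ precisely when $\mathrm{Supp}(D)\subseteq \{1,\ldots,n\}\setminus T$, and as $T$ ranges over all subsets of size $n-s$, its complement ranges over all subsets of size $s$. Hence the condition is equivalent to: no $r$-dimensional subspace $D\subseteq C$ satisfies $|\mathrm{Supp}(D)|\le s$. By Definition 2.4, this is exactly $d_r(C)>s$, completing both directions simultaneously.

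I do not anticipate any substantive obstacle here; the lemma is a chain of logical equivalences obtained by unwinding definitions, and the only slightly delicate point is confirming the correct floor/ceiling in the step $q^{\dim C(T)}\le L \iff \dim C(T)<r$, which is immediate from $r=\lfloor\log_q L\rfloor+1$. Linearity of $C$ is essential throughout — it is what allows both the reduction to codewords vanishing on $T$ and the identification of $C(T)$ with a subspace whose size is a power of $q$.
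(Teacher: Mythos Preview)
The paper does not actually prove Lemma~\ref{2.5}; it simply cites it from \cite{Gu}. Your argument is correct and is the standard one: it matches the reduction the paper sketches just before Definition~\ref{2.2} (the passage from arbitrary received words to $\bc_T=\bo$) and the dimension-counting step implicit in the proof of Lemma~\ref{3.2}, so there is nothing to compare against and nothing to fix.
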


 The link stated in Lemma \ref{2.5} establishes  a two-way bridge. Results for erasure list deciding can be derived directly  from the existing results on generalized Hamming weight, and thus the applications of generalized Hamming weight are inherited.  In the meanwhile, some new properties for generalized Hamming wight can be obtained as well if one can develop some fresh ideas on erasure list decoding.

In \cite{Gu}, Guruswami made use of the connection between generalized Hamming weight and erasure list decoding to establish some  bounds for rate $R_L(\tau)$ through the existing  bounds
on generalized Hamming weight.

  \begin{lem}\label{2.7}(see \cite{Gu}) One has
  \begin{itemize}
 \item [(i)]For every integer $L\geq 1$ and every $\tau$, $0\leq\tau\leq1$,
 \[R_L(\tau)\geq1-\frac{\tau}{r}\log_q\frac{q^r-1}{q-1}-\frac{H_q(\tau)}{r}\]
 where $r=\lfloor\log_q L\rfloor+1$. In particular, for any small $\epsilon>0$ and $\tau\in(0,1)$, there exists a  $(\tau,O(\exp(\frac1{\epsilon})))$-erasure list-decodable code of rate $1-\tau-\epsilon$.
 \item[(ii)]For small $\epsilon>0$ and  $\tau$ with $0<\tau<1$, a $(\tau,L)$-erasure list-decodable code of rate $1-\tau-\epsilon$ must satisfy $L\ge \Omega(\frac1{\epsilon})$.
  \end{itemize}
  \end{lem}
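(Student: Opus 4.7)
My plan is to reduce both parts to estimates on generalized Hamming weights via Lemma~\ref{2.5}, setting $r=\lfloor\log_q L\rfloor+1$.

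For part (i), I would run a Gilbert--Varshamov-style existence argument at the level of the $r$-th GHW. By Lemma~\ref{2.5}, it suffices to exhibit an $[n,k]_q$ linear code with $k=\lceil Rn\rceil$ and $d_r(C)>\tau n$ for every $n$ large enough. The number of ``bad'' $r$-dimensional subspaces $V\subseteq\F_q^n$ with $|\mathrm{Supp}(V)|\le s:=\lfloor\tau n\rfloor$ is at most $\binom{n}{s}\binom{s}{r}_q$ (first choose the support in an $s$-subset of $[n]$, then an $r$-subspace of $\F_q^s$), while a standard Grassmannian count gives $\Pr[V\subseteq C]\le q^{-r(n-k)}$ for a uniformly random $k$-dimensional subspace $C\subseteq\F_q^n$ and fixed $V$ of dimension $r$. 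A union bound then guarantees $d_r(C)>\tau n$ for some $C$ as soon as
\[
\binom{n}{s}\binom{s}{r}_q\cdot q^{-r(n-k)}<1.
\]
Plugging in the entropy estimate $\binom{n}{s}(q-1)^s\le q^{nH_q(\tau)}$ and the Gaussian-binomial estimate $\log_q\binom{s}{r}_q=r(s-r)+O(r)$, and using the identity $\frac{\tau}{r}\log_q\frac{q^r-1}{q-1}=\tau-\frac{\tau\log_q(q-1)}{r}+o_r(1)$, the condition rearranges precisely into the claimed inequality on $R_L(\tau)$. For the ``in particular'' clause I would choose $r=\Theta(1/\epsilon)$ so that the two $1/r$ correction terms collectively absorb $\epsilon$; the resulting $L=q^{r-1}=2^{O(1/\epsilon)}$ is as stated.

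For part (ii), I would again apply Lemma~\ref{2.5}: the hypothesis forces $d_r(C)>\tau n$ with $r=\lfloor\log_q L\rfloor+1$. Wei's generalized Singleton bound $d_r(C)\le n-k+r$, combined with $k\ge(1-\tau-\epsilon)n$, only yields the trivial $r\ge 1-\epsilon n$, so the plan is to use a sharper counting bound. I would double-count pairs $(T,\c)$ with $|T|=(1-\tau)n$, $\c\in C$, and $\c_T=\bo$: on one hand the count is at most $L\binom{n}{(1-\tau)n}$, and on the other it equals $\sum_{\c\in C}\binom{n-\mathrm{wt}(\c)}{(1-\tau)n}$. Isolating the contribution from $\c=\bo$ yields
\[
\sum_{\c\in C\setminus\{\bo\}}\binom{n-\mathrm{wt}(\c)}{(1-\tau)n}\Big/\binom{n}{(1-\tau)n}\le L-1.
\]
Lower-bounding the left-hand side---by leveraging that $R=1-\tau-\epsilon$ is only $\epsilon$ below the capacity $1-\tau$, via a Plotkin/LP-type constraint forcing sufficiently many codewords of weight $\le\tau n$---then gives $L\ge\Omega(1/\epsilon)$.

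The main obstacle is the final step of part (ii): Wei's Singleton bound is too weak, and one must use a sharper structural inequality on $d_r$ (equivalently, on the low-weight part of the weight enumerator) to capture the $\epsilon$-gap to capacity---strong enough to force $L\ge\Omega(1/\epsilon)$ but not so strong that it would contradict the $L=2^{O(1/\epsilon)}$ upper bound from part~(i).
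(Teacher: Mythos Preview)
The paper does not prove Lemma~\ref{2.7}; it is stated with the attribution ``(see \cite{Gu})'' and used as a quoted result, so there is no proof in the paper to compare your proposal against.

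That said, a brief assessment of your sketch: your argument for part~(i) is the standard one (essentially the Helleseth--Kl{\o}ve--Levenshtein--Ytrehus bound referenced as \cite{He}, from which \cite{Gu} derives the statement), and the counting is correct. For part~(ii), you correctly identify that Wei's generalized Singleton bound $d_r(C)\le n-k+r$ is useless here (it yields $r>-\epsilon n$), and your double-counting identity $\sum_{\c\in C}\binom{n-\mathrm{wt}(\c)}{(1-\tau)n}\le L\binom{n}{(1-\tau)n}$ is valid. However, the step you flag as the ``main obstacle'' is genuinely the whole content of part~(ii): you have not supplied any mechanism that forces enough low-weight codewords to make the left-hand side large. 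A rate-$(1-\tau-\epsilon)$ code need not have \emph{any} nonzero codeword of weight $\le\tau n$ (e.g.\ MDS codes), so the individual terms $\binom{n-\mathrm{wt}(\c)}{(1-\tau)n}$ can all vanish for $\c\neq\bo$, and your inequality degenerates to $L\ge 1$. The actual argument in \cite{Gu} does not go through the weight distribution; it uses a generalized Plotkin-type upper bound on $d_r$ (roughly $d_r(C)\le \frac{n(q^r-q^{r-1})}{q^r-1}$ in the relevant regime, combined with the monotonicity structure of the $d_i$'s) to force $r=\Omega(1/\epsilon)$ directly. So your plan for (ii) as written has a real gap: the double-counting framework is fine, but the proposed lower bound on the left-hand side does not exist in general without further input equivalent in strength to the Plotkin-type bound you were trying to avoid.
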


\section{Random List Decodable Erasure Codes}
Random $(1-\epsilon, O(1/\epsilon))$-erasure list-decodable codes of rate $R=\Omega(\epsilon/\log(1/\epsilon))$ was discussed in \cite{Gu} by using a characterization of generator matrices of erasure list-decodable codes. However, the rate is quite small and actually is dependent on $\epsilon$.
In this section, we are going to show that for any $0\leq R\leq 1$ ($R$ is independent of $\epsilon$), with probability $1-q^{-\Omega(n)}$  a random linear code $C$ of length $n$ and rate $R$ is $(1-R-\epsilon, 2^{O(1/\epsilon)})$-erasure list-decodable. Our approach is through a characterization of parity-check matrices of erasure list-decodable codes.

\begin{prop}\label{3.1} If $k/n\rightarrow R>0$ when $n$ tends to $\infty$, then for a random matrix $H$ over $\F_q$ of size ${(n-k)\times n}$, the probability that $H$ is full-rank is approaching $1$ when $n$ tends to $\infty$.\end{prop}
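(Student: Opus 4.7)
The plan is to reduce ``$H$ is full rank'' to ``the $n-k$ rows of $H$ are linearly independent'' and then compute the probability directly by exposing the rows one at a time.

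Write $m = n-k$, so $H$ is an $m\times n$ matrix. Since $m\le n$, $H$ has full rank iff its $m$ rows are linearly independent over $\F_q$. I reveal the rows $\bh_1,\bh_2,\dots,\bh_m$ one at a time. Conditioned on $\bh_1,\dots,\bh_i$ being linearly independent, they span a subspace of $\F_q^n$ of size $q^i$, and $\bh_{i+1}$ is uniform in $\F_q^n$, so the probability that $\bh_{i+1}$ lies in this span is $q^i/q^n = q^{i-n}$. Multiplying these conditional probabilities,
\[
\Pr[H\text{ has full rank}] \;=\; \prod_{i=0}^{m-1}\bigl(1-q^{i-n}\bigr).
\]

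Next I bound this product from below using the elementary inequality $\prod_i(1-x_i)\ge 1-\sum_i x_i$ valid for $x_i\in[0,1]$. Summing the geometric series,
\[
\sum_{i=0}^{m-1} q^{i-n} \;=\; q^{-n}\cdot\frac{q^m-1}{q-1}\;\le\;\frac{q^{m-n}}{q-1}\;=\;\frac{q^{-k}}{q-1}.
\]
Hence
\[
\Pr[H\text{ has full rank}] \;\ge\; 1-\frac{q^{-k}}{q-1}.
\]

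Finally, because $k/n\to R>0$ and $n\to\infty$, we have $k\to\infty$, so $q^{-k}\to 0$ and the lower bound tends to $1$, as required. There is really no serious obstacle here; the only thing one has to be careful about is converting the hypothesis $k/n\to R>0$ into $k\to\infty$, which is needed to conclude that $q^{-k}\to 0$ rather than merely staying bounded. The remainder is a one-line geometric series estimate.
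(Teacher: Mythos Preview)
Your proof is correct and follows essentially the same approach as the paper: both derive the same product formula $\prod_{i=0}^{n-k-1}(1-q^{i-n})$ for the probability and then show it tends to $1$ using $k\to\infty$. The only cosmetic difference is that you bound the product via the Weierstrass inequality $\prod(1-x_i)\ge 1-\sum x_i$ (yielding the clean estimate $1-q^{-k}/(q-1)$), whereas the paper takes logarithms and uses $\ln(1-x)\ge -2x$ to get the slightly weaker bound $-2n/q^k$; both are sufficient.
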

\begin{proof} On one hand,  it is easy to compute that the total number of  random matrix $H$ over $\F_q$ of size ${(n-k)\times n}$ with full rank is
$$(q^n-1)(q^n-q)\cdots(q^n-q^{n-k-1}).$$
On the other hand,  the total number of matrices $H$ over $\F_q$ of size ${(n-k)\times n}$ is $q^{n(n-k)}$. Let $E$ denote the event that an $(n-k)\times n$ random matrix $H$ over $\F_q$ is full-rank, then
$$Pr(E)=\frac{(q^n-1)(q^n-q)\cdots(q^n-q^{n-k-1})}{q^{n(n-k)}}.$$

To show $\lim_{n\rightarrow \infty}Pr(E)=1$, it suffices to show that $\lim_{n\rightarrow \infty}\ln Pr(E)\rightarrow 0.$

When $n$ tends to $\infty$, we have
\[0\ge \ln\frac{(q^n-1)(q^n-q)\cdots(q^n-q^{n-k-1})}{q^{n(n-k)}}=\sum_{i=k+1}^{n}\ln \left(1-\frac{1}{q^i}\right)
\geq\sum_{i=k+1}^n \left(-\frac{2}{q^i}\right)\geq-\frac{2n}{q^k}\rightarrow 0.\]
This completes the proof.
\end{proof}

\begin{lem}\label{3.2} Let $s$ be a positive integer, then an $[n,k]_q$ code $C$ is $(s/n,L)$-erasure-list-decodable if and only if any  submatrix $H^{'}_{(n-k)\times s}$ of the  parity check matrix $H_{(n-k)\times n}$ of $C$ has rank at least $s-\lfloor\log_q L\rfloor$.\end{lem}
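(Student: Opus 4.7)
The plan is to reduce the statement to the characterization of erasure list-decodability in terms of generalized Hamming weights, Lemma \ref{2.5}, and then translate the condition $d_r(C)>s$ into a rank statement about columns of the parity-check matrix via the standard duality between subcodes supported on a set and the kernel of the corresponding column submatrix.

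Set $r:=\lfloor \log_q L\rfloor+1$, so that by Lemma \ref{2.5} the code $C$ is $(s/n,L)$-erasure list-decodable if and only if $d_r(C)>s$. It therefore suffices to prove the equivalence
\[
d_r(C)>s\;\Longleftrightarrow\;\operatorname{rank}(H|_T)\ge s-\lfloor\log_q L\rfloor=s-r+1\ \text{ for every }T\subseteq\{1,\dots,n\}\text{ with }|T|=s,
\]
where $H|_T$ denotes the column submatrix of $H$ indexed by $T$. The bridge between the two sides is the elementary observation that, for any $T$, the subcode $C_T:=\{\c\in C:\operatorname{Supp}(\c)\subseteq T\}$ is $\F_q$-linearly isomorphic to $\ker(H|_T)$, so
\[
\dim C_T=|T|-\operatorname{rank}(H|_T).
\]

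For the forward direction, I would assume $d_r(C)>s$ and fix any $T$ with $|T|=s$. Every subspace of $C_T$ is a subcode of $C$ with support contained in $T$, hence of support size at most $s$, so by the definition of $d_r$ we must have $\dim C_T\le r-1$. The displayed identity then gives $\operatorname{rank}(H|_T)\ge s-r+1$, as required. For the converse I would argue contrapositively: if $d_r(C)\le s$, pick an $r$-dimensional $D\le C$ with $|\operatorname{Supp}(D)|\le s$, and extend $\operatorname{Supp}(D)$ arbitrarily to a set $T$ of size exactly $s$. Then $D\subseteq C_T$, so $\dim C_T\ge r$ and $\operatorname{rank}(H|_T)\le s-r$, violating the assumed rank lower bound.

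This proof is essentially bookkeeping, so I do not anticipate a substantive obstacle; the only point requiring mild care is the off-by-one arithmetic relating $r$ and $\lfloor \log_q L\rfloor$, and the need, in the converse direction, to enlarge a support of size $<s$ to a set of size exactly $s$ in order to match the hypothesis, which is harmless as long as $n\ge s$.
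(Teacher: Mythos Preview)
Your proof is correct and rests on the same core observation as the paper's: for a set $T$ of $s$ column indices, the subcode $\{\c\in C:\operatorname{Supp}(\c)\subseteq T\}$ is isomorphic to $\ker(H|_T)$, so its dimension is $s-\operatorname{rank}(H|_T)$. The only difference is packaging. The paper argues directly from Definition~\ref{1.1} and linearity: the codewords consistent with a received word on the $n-s$ unerased positions form a coset of this subcode, so bounding its size by $L$ is exactly the rank condition. You instead route through Lemma~\ref{2.5}, converting the problem to $d_r(C)>s$ and then back to the rank statement. Both arguments are equally short; the paper's is marginally more self-contained since it avoids the generalized Hamming weight detour, while yours makes the link to Lemma~\ref{2.5} explicit.
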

\begin{proof}
 By Definition \ref{1.1} and the fact that $C$ is a linear code,  $C$ is $(s/n,L)$-erasure-list-decodable if and only if \[|\{\c\in C| \c_T=\bo\}|\leq L\]
 for $T\subseteq\{1,2,\dots,n\}$ with size $n-s$.
This implies that $C$ is $(s/n,L)$-erasure-list-decodable if and only if for any submatrix $H^{'}_{(n-k)\times s}$ of $H_{(n-k)\times n}$,
\[|\{\bx\in \F_q^s| H^{'}_{(n-k)\times s}\cdot \bx=0\}|\leq L,\]
 i.e., the solution space of $H^{'}_{(n-k)\times s}$ has dimension at most $\lfloor\log_q L\rfloor$. Therefore,  $H^{'}_{(n-k)\times s}$  has rank at least $s-\lfloor\log_q L\rfloor$.

\end{proof}

\begin{thm}\label{3.3} For every small $\epsilon>0$, a real $0<R< 1$ and sufficiently large $n$, with  probability at least $1-q^{-\Omega(n)}$,  a random linear code over $\F_q$  of length $n$ and rate $R$  is $(1-R-\epsilon,2^{O(\frac{1}{\epsilon})})$-erasure list-decodable.
\end{thm}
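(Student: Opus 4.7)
The plan is to use the parity-check characterization of Lemma \ref{3.2} and bound, via a union bound, the probability that some $s$-column submatrix of a random parity-check matrix is too rank-deficient. Draw $H$ uniformly from $\F_q^{(n-k)\times n}$ with $k=\lfloor Rn\rfloor$, let $C=\{\c\in\F_q^n:H\c=\bo\}$, and set $s=\lceil(1-R-\epsilon)n\rceil$, so that $n-k-s\ge\epsilon n - 1$. By Proposition \ref{3.1}, $H$ has full row rank (so $\dim C=k$) with probability $1-o(1)$, so I condition on this event. Fix a positive integer $r$ to be chosen and put $L=q^{r-1}$; by Lemma \ref{3.2} it suffices to prove that with probability $1-q^{-\Omega(n)}$, $\ker H_T$ contains no $r$-dimensional subspace for any $T\subseteq\{1,\dots,n\}$ with $|T|=s$.

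For a fixed $T$ and a fixed $r$-dimensional subspace $V\le\F_q^s$ with basis $v_1,\dots,v_r$, the event $V\subseteq\ker H_T$ is $H_T v_1=\cdots=H_T v_r=\bo$. Extending the basis of $V$ to a basis of $\F_q^s$ produces an invertible $s\times s$ matrix $P$ with $Pe_i=v_i$; since the distribution of $H_T$ is invariant under $H_T\mapsto H_T P$, this event is equidistributed with the event that the first $r$ columns of $H_T$ vanish, which has probability $q^{-r(n-k)}$ because the entries of $H$ are i.i.d.\ uniform. A union bound over the $\binom{n}{s}\le q^n$ choices of $T$ and the $\binom{s}{r}_q\le 4q^{r(s-r)}$ choices of $V$ gives
$$\Pr\!\left[\exists\, T,V:V\subseteq\ker H_T\right]\;\le\; 4\,q^{\,n+r(s-r)-r(n-k)}\;\le\;4\,q^{\,n-r(\epsilon n-1)-r^2}.$$
Choosing $r=\lceil 2/\epsilon\rceil$ forces $r\epsilon\ge 2$, so the exponent is at most $-n+r-r^2\le -n$, and the probability is $q^{-\Omega(n)}$. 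The corresponding list size is $L=q^{r-1}\le q^{2/\epsilon}=2^{O(1/\epsilon)}$, which is precisely what the theorem asserts.

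The main obstacle is calibrating the union bound so that both counting factors are beaten simultaneously. The $\binom{n}{s}$ factor contributes at most $n$ to the $q$-ary logarithm of the bound and is overwhelmed by the $r\epsilon n$ gain from demanding $r$ (rather than one) independent vectors in $\ker H_T$, while the Gaussian binomial $\binom{s}{r}_q\le 4q^{r(s-r)}$ is exactly cancelled by the $r$ extra factors of $q^{-(n-k)}$, with slack $q^{-r^2}$ to spare. It is this tight balance, which fails in a naive single-vector minimum-distance argument, that upgrades the bound into one on the $r$-th generalized Hamming weight and therefore into a list size $2^{O(1/\epsilon)}$ that is constant in $n$.
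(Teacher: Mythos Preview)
Your proof is correct and follows essentially the same approach as the paper: interpret the random code via a uniformly random parity-check matrix, invoke Lemma~\ref{3.2} to reduce $(s/n,L)$-erasure list-decodability to a rank condition on all $(n-k)\times s$ submatrices, and then union-bound over the ${n\choose s}$ column subsets. The only difference is bookkeeping---the paper counts low-rank submatrices directly via a sum over ranks, whereas you equivalently union-bound over $r$-dimensional subspaces of $\F_q^s$ using the Gaussian binomial; both computations yield the same $q^{-\Omega(n)}$ failure probability with $r=\Theta(1/\epsilon)$ and hence $L=2^{O(1/\epsilon)}$.
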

\begin{proof} Put $\ell=\left\lceil\frac{1}{\epsilon}((2-R)\log_q{2}+1)\right\rceil$ and $L=q^\ell$. Thus, $L=2^{O(\frac{1}{\epsilon})}$. We  randomly pick a  matrix $H_{(n-k)\times n}$.  Then with probability approaching $1$, $H_{(n-k)\times n}$ is full rank from Proposition \ref{3.1}. Let such a full rank matric $H_{(n-k)\times n}$ be  the parity check matrix of our linear code $C$. Then we are going to  prove that with probability at most $q^{-\Omega(n)}$, $C$ is not $(s,L)$-erasure list-decodable for $s=\lfloor n-k-\epsilon n\rfloor$. By Lemma \ref{3.2}, this happens only if some $(n-k)\times s$ submatrix of $H$ has rank less than  $s-\lfloor\log_q L\rfloor$.

Denote $n-k$ by $K$. Let $A$ denote the number of full-rank matrices $H_{(n-k)\times n}$ in which there exists $s=n-k-\epsilon n$ columns with rank at most $s-\ell$. Note that the total number of  matrices of size $K\times s$ over $\F_q$ with rank at most $s-\ell$
is equal to $\sum_{i=0}^{s-\ell}\binom{K}{i}(q^s-1)\cdots(q^s-q^{i-1})q^{(K-i)i}$.
Thus, we have
\begin{eqnarray*}A&\leq& \binom{n}{s}q^{K(n-s)}\sum_{i=0}^{s-\ell}\binom{K}{i}(q^s-1)\cdots(q^s-q^{i-1})q^{(K-i)i}\\
&<&2^n\sum_{i=0}^{s-\ell}\binom{K}{i}q^{(si+Ki-i^2)+K(n-s)}\\
&\leq&2^n\times q^{(s+K)(s-\ell)-(s-\ell)^2+K(n-s)}\sum_{i=0}^{s-\ell}\binom{K}{i}\\
&\leq&2^{n+K}\times q^{(s+K)(s-\ell)-(s-\ell)^2+K(n-s)}\\
&\leq&q^{(n+K)\log_q{2}}\times q^{(2K-\epsilon n)(K-\epsilon n-\ell)-(K-\epsilon n-\ell)^2+K(n-K+\epsilon n)}\\
&<&q^{n((2-R)\log 2-\epsilon \ell)+Kn}.\end{eqnarray*}

Substituting the value of $\ell$ to the above equation, we have $$\limsup_{n\rightarrow \infty}\frac{A}{(q^n-1)(q^n-q)\cdots(q^n-q^{n-k-1})}\le
\limsup_{n\rightarrow \infty}\frac{A}{q^{(n-k)n}}\times\lim_{n\rightarrow \infty}\frac{q^{(n-k)n}}{(q^n-1)(q^n-q)\cdots(q^n-q^{n-k-1})}\le \lim_{n\rightarrow \infty} q^{-n}=0.$$

This implies that with probability at most $q^{-n}$,  a random  matrix $H_{(n-k)\times n}$ has full rank  and  a submatrix of size $(n-k)\times s$ of rank at most $s-\ell$. The claimed result follows from  setting of our parameters.
\end{proof}

\section{Algebraic Geometry Codes are Good Erasure List-Decodable Codes}
In the previous section, we proved that random codes are good erasure list-decodable codes. There is still a lack of  constructive results on erasure list decoding. Though Guruswami \cite{Gu} presented a constructive result from concatenated codes, the rate is extremely small. In this section,  we  show that algebraic geometry (AG for short) codes are good erasure list-decodable codes  and furthermore they can be list decoded in  polynomial-time. As a preparation, we recall some basic results on AG codes first. Readers may refer to \cite{St93} for more details.

Let $\cX$ be a smooth, projective, absolutely irreducible curve of genus $g(\cX)$ (we will use $g$ instead of $g(\cX)$ if there is no confusion in the context) defined over $\F_q$. We denote by $\F_q(\cX)$ the function field of $\cX$. Denote by $N(\cX)$  the number of rational points of $\cX$. Let $\cP=\{P_1,...,P_n\}$ be a set of  $n$ distinct rational points over $\F_q$. Let $G$ be a divisor  such that ${\rm Supp}(G)\cap\{P_1,...,P_n\}=\emptyset$. Define $\cL(G)$ as the Riemann-Roch space associated to $G$ and denote $\dim \cL(G)=\ell(G)$. The algebraic geometry code $C(G,\cP)$ is defined as the image of $\cL(G)$ in $\F_q^n$ under the following evaluation map
$$C:\mathcal{L}(G)\longrightarrow \F_q^n,\quad f\mapsto (f(P_1),...,f(P_n)).$$
If $n>\deg G$, then $C(G,\cP)$ is an $[n,\ge\deg G-g+1,\ge n-\deg G]_q$-AG code. Throughout this section, we always assume that $n$ is bigger than $\deg(G)$.

The gonality of a curve   $\cX$  was introduced in \cite{Pe}. It is defined to be the  smallest degree of a nonconstant map  from $\cX$ to the projective line. We denote the gonality of $\cX$ by $t(\cX)$. More specifically, if $\cX$ is defined over a field $\F_q$ and $\F_q(\cX)$ is the function field of $\cX$, then $t(\cX)$ is the minimum degree of the field extensions of $\F_q(\cX)$ over a rational function field. It is easy to see that if $g(\cX)=0$, then  $t(\cX)=1$. If $g(\cX)=1$ or $ 2$, then  $t(\cX)=2$. However, for general $g$, the gonality is no longer determined by genus. In general, we have the following lower bound for $t(\cX)$.

\begin{lem}(\cite{Pe}) \label{4.1}
Let $\cX$ be a curve defined over $\F_q$ of genus $g$ with $N$ rational points. Then $ t(\cX)\geq N/(q+1)$.
\end{lem}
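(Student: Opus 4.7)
The plan is to work directly from the definition of the gonality. Given any nonconstant morphism $\phi: \cX \to \P^1$ defined over $\F_q$ of degree $d$, I want to bound $N$ from above in terms of $d$ and $q$, and then take the infimum over all such $\phi$. The key observation is that, since $\phi$ is defined over $\F_q$, every $\F_q$-rational point of $\cX$ must map to an $\F_q$-rational point of $\P^1$, and $\P^1(\F_q)$ has exactly $q+1$ elements.

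The next step is to control the size of each fiber. For any rational point $P \in \P^1(\F_q)$, the scheme-theoretic preimage $\phi^{-1}(P)$ is an effective divisor on $\cX$ of degree exactly $d$, so the number of distinct $\F_q$-points lying in that fiber is at most $d$ (each such point contributes at least $1$ to the degree). Summing over the $q+1$ rational points of $\P^1$ partitions $\cX(\F_q)$ among the fibers, giving
$$N \;=\; \sum_{P \in \P^1(\F_q)} \bigl| \phi^{-1}(P) \cap \cX(\F_q) \bigr| \;\leq\; (q+1)\, d.$$
Rearranging yields $d \geq N/(q+1)$, and taking the infimum over all nonconstant $\phi$ gives $t(\cX) \geq N/(q+1)$.

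The proof is short and there is no serious obstacle; the only thing to handle carefully is the equivalence between the two formulations of gonality stated in the paper (smallest degree of a nonconstant map $\cX \to \P^1$ versus minimum degree of a finite extension $\F_q(\cX)/\F_q(x)$), together with the standard fact that the degree of such a morphism coincides with the degree of any of its fibers viewed as a divisor on $\cX$. Both are standard facts about function fields of curves, so the bound follows essentially by a counting argument.
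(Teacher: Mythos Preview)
Your argument is correct and is the standard proof of this bound. Note, however, that the paper does not actually prove Lemma~\ref{4.1}; it is quoted from Pellikaan~\cite{Pe} without proof, so there is no ``paper's own proof'' to compare against. Your fiber-counting argument is exactly the one given in \cite{Pe}: a nonconstant $\F_q$-morphism $\phi:\cX\to\P^1$ of degree $d$ sends $\cX(\F_q)$ into the $q+1$ fibers over $\P^1(\F_q)$, each of which contains at most $d$ rational points, whence $N\le (q+1)d$.
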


By using the lower bound on $t(\cX)$, one has the following proposition.
\begin{prop}\label{4.3}  $C(G,\cP)$ is $\left(\frac1n\left(n-\deg(G)+\lceil\frac{n}{q+1}\rceil-1\right),q\right)$-erasure list-decodable.\end{prop}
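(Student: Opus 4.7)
The plan is to invoke Lemma \ref{2.5} with $L = q$, so that $r = \lfloor \log_q q \rfloor + 1 = 2$; it then suffices to establish the bound
\[
d_2(C(G,\cP)) \;\geq\; n - \deg(G) + \left\lceil \tfrac{n}{q+1} \right\rceil
\]
on the second generalized Hamming weight, and the proposition will follow since the stated erasure radius corresponds to $s = n - \deg(G) + \lceil n/(q+1)\rceil - 1 < d_2(C(G,\cP))$.

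Fix any $2$-dimensional subspace $D \subseteq C(G,\cP)$. Because $n > \deg(G)$, the evaluation map $\cL(G) \to \F_q^n$ is injective (a nonzero element of the kernel would have more than $\deg(G)$ zeros), so $D$ lifts to a $2$-dimensional $V \subseteq \cL(G)$ with basis $\{f,h\}$. Writing $I = \{i : f(P_i) = h(P_i) = 0\}$ for the set of common zeros of $V$ inside $\cP$, we have $|\mathrm{Supp}(D)| = n - |I|$, so the target becomes $|I| \leq \deg(G) - \lceil n/(q+1)\rceil$.

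The key step is to control $|I|$ via the base divisor of the pencil,
\[
B_V \;:=\; \min\bigl((f)+G,\ (h)+G\bigr),
\]
which is effective and satisfies $v_P(B_V) = v_P(G) + \min(v_P(f),v_P(h))$ at every place $P$. Since $\mathrm{Supp}(G) \cap \cP = \emptyset$, for each $i \in I$ we have $v_{P_i}(B_V) \geq 1$, so $\sum_{i \in I} P_i \leq B_V$ and $|I| \leq \deg(B_V)$. Next, the morphism $\phi_V : \cX \to \P^1$, $P \mapsto [f(P):h(P)]$, is nonconstant (as $f,h$ are linearly independent) and defined over $\F_q$; a direct valuation check using $\max(v_P(h)-v_P(f),0) = v_P((h)+G) - v_P(B_V)$ yields
\[
(\phi_V)_\infty \;=\; ((h)+G) - B_V, \qquad \deg(\phi_V) \;=\; \deg(G) - \deg(B_V).
\]
By the definition of gonality and Lemma \ref{4.1}, $\deg(\phi_V) \geq t(\cX) \geq N(\cX)/(q+1) \geq n/(q+1)$, and integrality upgrades this to $\deg(\phi_V) \geq \lceil n/(q+1)\rceil$. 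Combining gives $|I| \leq \deg(G) - \lceil n/(q+1)\rceil$, as desired.

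The step I expect to require the most care is the base-divisor identity $\deg(B_V) = \deg(G) - \deg(\phi_V)$: the formalism is classical, but the valuation bookkeeping — in particular verifying that $(\phi_V)_\infty = ((h)+G) - B_V$ place by place, so that $\deg(\phi_V) = \deg((h)+G) - \deg(B_V) = \deg(G) - \deg(B_V)$ — must be done carefully, and one has to note that linear independence of $f,h$ ensures $\phi_V$ is nonconstant so that the gonality bound applies. Once that identity is in hand, the rest is an immediate assembly of Lemma \ref{2.5} and Lemma \ref{4.1}.
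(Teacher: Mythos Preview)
Your proof is correct and rests on the same key input as the paper's---the gonality bound $t(\cX)\ge N/(q+1)$ of Lemma~\ref{4.1}---but the packaging differs. The paper argues by contradiction directly in terms of Riemann--Roch spaces: assuming some $T$ of size $n-s$ has $\ell\bigl(G-\sum_{i\in T}P_i\bigr)\ge 2$, it shifts to the linearly equivalent effective divisor $H=(f)+G-\sum_{i\in T}P_i$ of degree at most $\lceil n/(q+1)\rceil-1$, picks any nonconstant $z\in\cL(H)$, and observes $[\F_q(\cX):\F_q(z)]\le\deg H<N/(q+1)$ from $(z)_\infty\le H$, contradicting Lemma~\ref{4.1}. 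You instead go through $d_2$ via Lemma~\ref{2.5} and the base-divisor formalism for the pencil, computing $\deg\phi_V=\deg G-\deg B_V$. The two routes are equivalent (your $\phi_V=f/h$ plays the role of the paper's $z$, and $\deg B_V\le\deg H$ in the comparison), but the paper's version sidesteps the pole-divisor identity $(\phi_V)_\infty=((h)+G)-B_V$ that you correctly flagged as the delicate step: it gets the degree bound in one line from $(z)+H\ge 0$. Your approach has the advantage of making the generalized Hamming weight bound $d_2(C(G,\cP))\ge n-\deg(G)+\lceil n/(q+1)\rceil$ explicit as a byproduct.
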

\begin{proof} Let $s$ be a positive integer with $s\leq n-\deg(G)+\lceil\frac{n}{q+1}\rceil-1$.  For any subset $T\subseteq\{1,2,\cdots,n\}$ of size $n-s$, we claim that $$|\{\c\in C(G,\cP)|\c_T=\mathbf{0}\}|\leq q.$$
This is equivalent to proving that $$\dim \mathcal{L}\left(G-\sum_{i\in  T}P_i\right)\leq 1.$$
Suppose $\dim \mathcal{L}\left(G-\sum_{i\in  T}P_i\right)\ge 2$, the one can choose a nonzero function $f\in \mathcal{L}\left(G-\sum_{i\in  T}P_i\right)$, then $$(f)+G-\sum_{i\in T}P_i\geq 0.$$

Let $H=(f)+G-\sum_{i\in  T}P_i\geq 0$. Then it is clear that $$\deg H=\deg\left(G-\sum_{i\in  T}P_i\right)=\deg(G)+s-n\le \left\lceil\frac{n}{q+1}\right\rceil-1\quad \mbox{and}\quad \dim\mathcal{L}(H)=\dim\mathcal{L}\left(G-\sum_{i\in  T}P_i\right)\ge 2.$$ Choose a function $z\in\mathcal{L}(H)\setminus\F_q$, then $[\F_q(\cX):\F_q(z)]$ is at most $\deg(H)\le \lceil\frac{n}{q+1}\rceil-1<\frac{N}{q+1}$. This contradicts Lemma \ref{4.1}.

Our desired result follows from Definition \ref{1.1}.
\end{proof}

Proposition \ref{4.3} can be extended by the Grismer bound through the following lemma.

\begin{lem}\label{4.5}
If  a divisor $G$ satisfies $\ell(G)\geq t\ge 1$ and $\deg G<N$, then $\deg G\geq N\cdot \frac{q^{t-1}-1}{q^t-1}$, where $N$ stands for the number of  rational points on $\cX$.
\end{lem}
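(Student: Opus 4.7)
The plan is to convert the divisor $G$ into an algebraic geometry code on all $N$ rational points of $\cX$ and then apply the classical Griesmer bound. First, by replacing $G$ with a linearly equivalent divisor (an operation that affects neither $\deg G$ nor $\ell(G)$), I can arrange that ${\rm Supp}(G)\cap\{P_1,\dots,P_N\}=\emptyset$; this is a routine strong-approximation / moving-lemma step on a smooth curve. With $\cP=\{P_1,\dots,P_N\}$, form the AG code $C(G,\cP)$. The evaluation map $\cL(G)\to\F_q^N$ is injective because $\deg G<N$: any nonzero $f\in\cL(G)$ vanishing at every $P_i$ would satisfy $\deg(f)_0\geq N>\deg G\geq\deg(f)_\infty$, contradicting $\deg(f)_0=\deg(f)_\infty$. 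Hence $C(G,\cP)$ is an $[N,\ell(G),\,\geq N-\deg G]_q$ code.

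Since $\ell(G)\geq t$, I may pass to any $t$-dimensional subcode, which still has minimum distance at least $N-\deg G$, producing an $[N,t,d']_q$ code with $d'\geq N-\deg G$. Applying the Griesmer bound yields
\[
N\;\geq\;\sum_{i=0}^{t-1}\left\lceil\frac{d'}{q^i}\right\rceil\;\geq\;d'\sum_{i=0}^{t-1}q^{-i}\;=\;d'\cdot\frac{q^t-1}{q^{t-1}(q-1)}.
\]
Plugging in $d'\geq N-\deg G$ and rearranging,
\[
N\,q^{t-1}(q-1)\;\geq\;(N-\deg G)(q^t-1),
\]
which simplifies directly to $\deg G\geq N\cdot\frac{q^{t-1}-1}{q^t-1}$, the desired inequality.

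The step requiring a bit of care is the Griesmer-bound manipulation (elementary but easy to botch algebraically); everything else is standard. The opening reduction to a support-disjoint divisor is a one-liner from function-field strong approximation, and the code-theoretic estimates --- injectivity of the evaluation map, the distance lower bound, and the Griesmer bound itself --- are all classical. The proof also makes transparent why the conclusion takes its particular Griesmer-ratio shape: it is literally the Griesmer bound transplanted to the AG code constructed from $G$.
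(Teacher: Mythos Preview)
Your proof is correct and follows essentially the same route as the paper: move $G$ to a linearly equivalent divisor whose support avoids the rational points (via strong approximation), form the AG code on all $N$ rational points to get an $[N,\ell(G),\,\ge N-\deg G]_q$ code, and then apply the Griesmer bound summed to $t$ terms. The only cosmetic difference is that the paper applies Griesmer directly to the full code and truncates the sum at index $t-1$, whereas you first pass to a $t$-dimensional subcode; the resulting inequality and algebra are identical.
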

\begin{proof} Suppose $P_1,\dots,P_N$ are $N$  distinct  rational points on $\cX$. By the strong approximation theorem,
there exists $x\in \F_q(\cX)$ such that ${\rm Supp} ((x)+G)\cap \{P_1,\dots,P_N\}=\emptyset$. Then $\ell((x)+G)=\ell(G)$ and $\deg((x)+G)=\deg(G)$.
Thus, we can obtain an algebraic geometry code $C((x)+G,\{P_1,\dots,P_N\})$ with parameters $[N,\ell(G),d\geq N-\deg G]_q$. By the Grismer bound \cite{LX04}, we have
\[N\geq\sum_{i=0}^{\ell(G)-1}\left \lceil\frac{d}{q^i}\right\rceil\geq\sum_{i=0}^{t-1}\left \lceil\frac{d}{q^i}\right\rceil\geq (N-\deg G)\sum_{i=0}^{t-1}\frac{1}{q^i}.\]
Thus, the desired result follows from the above inequality.
\end{proof}

\begin{thm}\label{4.7}  If $G$ satisfies $\ell(G)\geq t\ge 1$ and $\deg G<n$, then $C(G, \cP)$  is  $\left(\frac1n\left(n-\deg(G)+\lceil\frac{q^{t-1}-1}{q^t-1}n\rceil-1\right),q^{t-1}\right)$-erasure list-decodable.\end{thm}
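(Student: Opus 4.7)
My plan is to mirror the proof of Proposition~\ref{4.3}, replacing the gonality estimate (Lemma~\ref{4.1}) by the Griesmer-type estimate (Lemma~\ref{4.5}). The statement is genuinely the natural generalization: when $t=2$ we have $(q^{t-1}-1)/(q^t-1)=(q-1)/(q^2-1)=1/(q+1)$ and list size $q^{t-1}=q$, recovering Proposition~\ref{4.3}.

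First, I would reduce the erasure list decodability to a Riemann--Roch dimension bound. Fix $s$ a positive integer with $s\le n-\deg(G)+\lceil\frac{q^{t-1}-1}{q^t-1}n\rceil-1$ and let $T\subseteq\{1,\dots,n\}$ have size $n-s$. Since $\deg(G)<n$ the evaluation map $\cL(G)\to\F_q^n$ is injective (its kernel lies in $\cL(G-\sum_{i=1}^n P_i)=\{0\}$), so the codewords $\c\in C(G,\cP)$ with $\c_T=\bo$ are in bijection with functions in $\cL\bigl(G-\sum_{i\in T}P_i\bigr)$. Therefore it suffices to prove
\[
\ell\!\left(G-\sum_{i\in T}P_i\right)\le t-1,
\]
which would give $|\{\c\in C(G,\cP):\c_T=\bo\}|\le q^{t-1}$ and hence the claimed erasure list decodability via Definition~\ref{1.1}.

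Next, I would argue by contradiction. Suppose $\ell\bigl(G-\sum_{i\in T}P_i\bigr)\ge t$. Pick a nonzero $f\in\cL\bigl(G-\sum_{i\in T}P_i\bigr)$ and set $H=(f)+G-\sum_{i\in T}P_i$. Then $H\ge 0$, $\ell(H)=\ell\bigl(G-\sum_{i\in T}P_i\bigr)\ge t$, and
\[
\deg H=\deg(G)+s-n\le \left\lceil\frac{q^{t-1}-1}{q^t-1}n\right\rceil-1<\frac{q^{t-1}-1}{q^t-1}\,n\le\frac{q^{t-1}-1}{q^t-1}\,N,
\]
where $N\ge n$ is the number of rational points of $\cX$. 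In particular $\deg H<n\le N$, so Lemma~\ref{4.5} applies to $H$ and yields $\deg H\ge N\cdot\frac{q^{t-1}-1}{q^t-1}$, contradicting the display above.

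The main obstacle is essentially bookkeeping: one must be careful that $H$ is effective (so Lemma~\ref{4.5} is applicable), that $\deg H<N$ (the hypothesis of Lemma~\ref{4.5}), and that the strict inequality produced by the ceiling $\lceil\cdot\rceil-1$ is indeed strict against the bound $\frac{q^{t-1}-1}{q^t-1}n$. All three points are immediate from the setup, so no substantive difficulty arises beyond correctly tracking the generalization from $t=2$ to general $t$.
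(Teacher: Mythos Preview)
Your argument is correct and follows the same route as the paper: bound $\deg\bigl(G-\sum_{i\in T}P_i\bigr)$ and invoke the contrapositive of Lemma~\ref{4.5} to force $\ell\bigl(G-\sum_{i\in T}P_i\bigr)\le t-1$. The only difference is that you pass to an equivalent effective divisor $H$ before applying Lemma~\ref{4.5}, but this is unnecessary---the lemma has no effectivity hypothesis, and the paper simply applies it directly to $G-\sum_{i\in T}P_i$.
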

\begin{proof}
Let $s$ be an integer satisfying $s \leq n-\deg G+\lceil\frac{q^{t-1}-1}{q^t-1}n\rceil-1$. For any $T\subseteq\{1,2,\cdots,n\}$ of size $n-s$, we  have
\[\deg\left(G-\sum_{i\in T}P_i\right)=\deg G-|T| =\deg G-n+s\le \left\lceil\frac{q^{t-1}-1}{q^t-1}n\right\rceil-1<N\cdot \frac{q^{t-1}-1}{q^t-1}.\]
By Lemma \ref{4.5}, we have
\[\ell\left(G-\sum_{i\in T}P_i\right)\le t-1.\]
Our desired result follows from Definition \ref{1.1}.
\end{proof}

\begin{rem}
When $t=1$, Theorem \ref{4.7} shows that $C(G, \cP)$  is  $\left(\frac1n\left(n-\deg(G)-1\right),1\right)$-erasure list-decodable. For $t=2$, we obtain the result of Proposition \ref{4.3}.
\end{rem}

Combing  Lemma \ref{2.5} and Theorem \ref{4.7}, we immediately obtain the following lower bound on generalized Hamming weight of algebraic geometry codes.
\begin{cor} For $1\leq t\leq \deg(G)-g+1$, the $t$-th generalized Hamming weight of $C(G,\mathcal{P})$ satisfies $$d_t(C(G,\mathcal{P}))\ge n-\deg(G)+\left\lceil\frac{q^{t-1}-1}{q^t-1}n\right\rceil.$$ \end{cor}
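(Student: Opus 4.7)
The plan is to deduce this corollary as a direct consequence of Theorem \ref{4.7} via the dictionary between erasure list decodability and generalized Hamming weights supplied by Lemma \ref{2.5}. No new geometric input is needed; the work is in choosing parameters correctly and checking that the hypotheses of Theorem \ref{4.7} hold under the stated range of $t$.

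First, I would verify that Theorem \ref{4.7} applies. The assumption $1\le t\le \deg(G)-g+1$ gives, by the Riemann--Roch inequality $\ell(G)\ge \deg(G)-g+1$, exactly the condition $\ell(G)\ge t$. The other hypothesis, $\deg(G)<n$, is the standing assumption of the section (stated just after the definition of $C(G,\cP)$).

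Next, I would set $L:=q^{t-1}$ and $s:=n-\deg(G)+\lceil\frac{q^{t-1}-1}{q^{t}-1}n\rceil-1$. Theorem \ref{4.7} then asserts that $C(G,\cP)$ is $(s/n, q^{t-1})$-erasure list-decodable. Applying Lemma \ref{2.5} with $L=q^{t-1}$ gives $r=\lfloor\log_q q^{t-1}\rfloor+1=t$, and hence
\[
d_t(C(G,\cP)) > s = n-\deg(G)+\left\lceil\tfrac{q^{t-1}-1}{q^{t}-1}n\right\rceil-1,
\]
which is equivalent (since $d_t$ is an integer) to the desired bound
\[
d_t(C(G,\cP)) \ge n-\deg(G)+\left\lceil\tfrac{q^{t-1}-1}{q^{t}-1}n\right\rceil.
\]

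There is really no hard step here: the content of the corollary is already buried in Theorem \ref{4.7}, and Lemma \ref{2.5} is just a translation. The only point that deserves care is matching $L$ with $r$ so that $\lfloor\log_q L\rfloor+1$ equals the index $t$ one wants a bound for; choosing $L=q^{t-1}$ (rather than $L=q^t$ or $L=q^{t-1}-1$) achieves this and is exactly the list size that Theorem \ref{4.7} delivers.
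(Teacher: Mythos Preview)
Your proposal is correct and is exactly the paper's approach: the corollary is stated as an immediate consequence of combining Lemma \ref{2.5} and Theorem \ref{4.7}, and your parameter choice $L=q^{t-1}$, $r=t$, $s=n-\deg(G)+\lceil\frac{q^{t-1}-1}{q^t-1}n\rceil-1$ is precisely the intended translation.
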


Now we come to the main result of this section.
\begin{thm}\label{4.8}
Let $q$ be a square. For any small $\epsilon>0$ and $\tau$ with $0<\tau<1-\frac{1}{\sqrt{q}-1}+\frac{1}{q}-\epsilon$, there exists  a family $\{C(G,\cP)\}$ of algebraic geometry code with length tending to $\infty$ such that $C(G,\cP)$ have rate at least $1-\tau-\frac{1}{\sqrt{q}-1}+\frac{1}{q}-\epsilon$ and are  $(\tau, O(\frac1{\epsilon}))$-erasure list-decodable. Furthermore, it can be list decoded in $O((n\log_qn)^3)$ time, where $n$ is the length of the code.
\end{thm}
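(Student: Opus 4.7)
The plan is to instantiate Theorem~\ref{4.7} on a curve from the Garcia--Stichtenoth tower with an auxiliary parameter $t$ tuned so that both the list size and the ``slippage'' in the decoding radius are of order $O(1/\epsilon)$.

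\textbf{Parameter choice.} Since $q$ is a square, the Garcia--Stichtenoth tower supplies an explicit sequence $\{\cX_m\}$ of curves over $\F_q$ with $g(\cX_m)/N(\cX_m)\to 1/(\sqrt{q}-1)$, attaining the Drinfeld--Vladut bound. The identity
\[
\frac{1}{q} - \frac{q^{t-1}-1}{q^{t}-1} \;=\; \frac{q-1}{q(q^{t}-1)}
\]
shows that choosing $t = \lceil \log_q(3/\epsilon) \rceil + O(1)$ makes $\tfrac{q^{t-1}-1}{q^t-1}\geq \tfrac{1}{q} - \tfrac{\epsilon}{3}$ while keeping the list size $L=q^{t-1} = O(1/\epsilon)$. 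Now fix $m$ large enough that $g_m/N_m \leq \tfrac{1}{\sqrt{q}-1} + \tfrac{\epsilon}{3}$.

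\textbf{Code construction and verification.} On $\cX_m$, pick $\cP = \{P_1,\dots,P_n\}$ consisting of $n = N_m - 1$ rational points together with one extra rational point $Q\notin\cP$, and take $G = dQ$ with $d$ the largest integer satisfying both $d < n$ and $\lceil \tau n\rceil \leq n-d+\lceil\tfrac{q^{t-1}-1}{q^t-1}n\rceil -1$; concretely, $d = \min\!\bigl(n-1,\; \lfloor n(1-\tau)+\tfrac{q^{t-1}-1}{q^t-1}n\rfloor-1\bigr)$. For $n$ large this choice satisfies $d\geq 2g_m-1+t$, so Riemann--Roch gives $\ell(G)=d-g_m+1\geq t$. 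Theorem~\ref{4.7} then yields that $C(G,\cP)$ is $(\tau,q^{t-1})$-erasure list-decodable, and the rate satisfies
\[
R \;\geq\; \frac{d-g_m+1}{n} \;\geq\; 1 - \tau + \frac{q^{t-1}-1}{q^t-1} - \frac{g_m}{n} - o(1) \;\geq\; 1 - \tau - \frac{1}{\sqrt{q}-1} + \frac{1}{q} - \epsilon
\]
by the parameter choices above.

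\textbf{Polynomial-time decoding and main obstacle.} Given a received word $\r$ with unerased coordinate set $T$ of size $(1-\tau)n$, the set of consistent codewords is in bijection with $\{f\in\cL(G): f(P_i)=r_i\text{ for all }i\in T\}$, an affine space whose direction $\cL(G-\sum_{i\in T}P_i)$ has $\F_q$-dimension at most $t-1$ by Lemma~\ref{4.5}. I would compute an explicit $\F_q$-basis of $\cL(G)$ via the known efficient constructions for Riemann--Roch spaces on the Garcia--Stichtenoth tower, solve the resulting $(1-\tau)n\times\ell(G)$ linear system by Gaussian elimination to obtain one particular solution together with a basis of the kernel, and enumerate the at most $q^{t-1}=O(1/\epsilon)$ candidates. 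The combinatorial portion of the proof is a straightforward optimization of $t$ and $d$ against the Drinfeld--Vladut-attaining asymptotics of the tower; the substantive difficulty lies in the complexity claim $O((n\log_q n)^3)$, which requires concrete quantitative bounds both on the explicit Riemann--Roch construction and on the cost of a single field operation on the tower, rather than merely an abstract polynomial-time guarantee for arbitrary curves.
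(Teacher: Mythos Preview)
Your proposal is correct and follows essentially the same route as the paper: instantiate Theorem~\ref{4.7} on a one-point divisor over a Garcia--Stichtenoth curve, tune $t$ so that $q^{t-1}=O(1/\epsilon)$ and $\frac{q^{t-1}-1}{q^t-1}\ge \frac1q-O(\epsilon)$, and read off the rate via Riemann--Roch and the tower's asymptotics. The complexity obstacle you flag is precisely what the paper dispatches by citing \cite{Shu} for an $O((n\log_q n)^3)$ construction of a basis of $\cL(G)$ on the Garcia--Stichtenoth tower, after which your linear-system step finishes the decoding.
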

\begin{proof} Choose a curve $\cX/\F_q$ in the Garcia-Stichtenoth tower \cite{Ga}. Then $N(\cX)/g(\cX)\rightarrow\sqrt{q}-1$. Let $\cP=\{P_1,P_2,\dots,P_n\}$ with $n=N(\cX)-1$. Choose the last rational point $P$ of $\cX$ such that $P\not\in\cP$.  Put
 \[m:=n-\lceil\tau n\rceil+\left\lceil\frac{q^{t-1}-1}{q^t-1}n\right\rceil-1\]
 and $G=mP$. By Theorem \ref{4.7},  $C(G, \cP)$  is  $\left(\frac1n\left(n-m+\lceil\frac{q^{t-1}-1}{q^t-1}n\rceil-1\right),q^{t-1}\right)$-erasure list-decodable for any constant $t\ge 1$. Hence,  $C(G, \cP)$  is $(\tau, q^{t-1})$-erasure list-decodable. Pick $\epsilon=\frac{1}{q}-\frac{q^{t-1}-1}{q^t-1}=\frac{q-1}{q(q^t-1)}$, then $q^{t-1}=O(\frac1{\epsilon})$. Moreover, the rate of $C(G, \cP)$ is at least
 \[\frac{1}{n}(m-g+1)\rightarrow 1-\tau-\frac{1}{\sqrt{q}-1}+\frac{q^{t-1}-1}{q^t-1}=1-\tau-\frac{1}{\sqrt{q}-1}+\frac{1}{q}- \epsilon.\]
   This proves the first statement of the theorem.

   Finally by \cite{Shu}, we know that a basis of $\cL(G)$ can be found in  $O((n\log_qn)^3)$ time, where $n$ is the length of the code. Assume that we have already found a basis $f_1,\dots,f_k$ of $\cL(G)$. Suppose that $\bc=(c_1,\dots,c_n)$ was transmitted and $\bc_T$ was received with $T\subseteq\{1,2,\dots,n\}$ and $|T|\ge (1-\tau)n$. A function $f\in \cL(G)$ is in the list if and only if  $f(P_i)=c_i$ for all $i\in T$. Let $f=\sum_{j=1}^k\lambda_j f_j$ with $\lambda_j$ being unknowns. Then one has $\sum_{j=1}^k\lambda_j f_j(P_i)=c_i$ for all $i\in T$. This is a system of linear equations with $|T|$ equations and $k$ unknowns. It can be solved in $O(n^3)$ time. This completes the proof.
\end{proof}

\end{document}